\newtheorem{theorem}{Theorem}[section]
\newtheorem{lemma}[theorem]{Lemma}
\newtheorem{proposition}[theorem]{Proposition}
\newtheorem{definition}[theorem]{Definition}
\newtheorem{example}[theorem]{Example}
\newcommand{\T}{\mathrm}
\newcommand{\Q}{\mathbb{Q}}
\newcommand{\C}{\mathbb{C}}
\newcommand{\R}{\mathbb{R}}
\newcommand{\Z}{\mathbb{Z}}
\begin{document}
\author[M.Shi, Z.Sepasdar, R. S. Wu and P.Sol\'e]
{ MinJia ~ Shi$^{1,*}$, Zahra ~Sepasdar$^{2}$, Patrick ~ Sol\'e$^{3,4}$}

\title[$\Z_{p^k}$-codes]
{Two weight $\Z_{p^k}$-codes, $p$ odd prime}
\subjclass[2010]{94B05, 05C50, 05E30, 11T71} \keywords{Two weight code, Generalized Gray map, Strongly regular graph.}
\thanks{$^*$Corresponding author}
\thanks{E-mail addresses: smjwcl.good@163.com, zahra.sepasdar@gmail.com, patrick.sole@telecom-paristech.fr}
\maketitle

\begin{center}
{\it
${^1}$Key Laboratory of Intelligent Computing \& Signal Processing,
Ministry of Education, Anhui University
No. 3 Feixi Road, Hefei
Anhui Province 230039, P. R. China, National Mobile Communications Research Laboratory,\\ Southeast University, 210096, Nanjing,  P. R. China
and School of Mathematical Sciences of Anhui University, Hefei, 230601, P. R. China
\\${^2}$Department of Pure Mathematics, Ferdowsi University of Mashhad, P. O. Box 1159-91775, Mashhad, Iran\\
   ${^3}$CNRS/LTCI, Telecom ParisTech, University of Paris-Saclay, 75 013 Paris, France\\
$^4$ Department of Mathematics, Faculty of Science, King Abdulaziz University, Jeddah 21589, Saudi Arabia }
\end{center}
\vspace{0.4cm}

\begin{abstract}
We show that regular homogeneous two-weight $\Z_{p^k}$-codes where $p$ is odd and $k\geqslant 2$ with dual Hamming distance at least four do not exist. The proof relies on existence conditions for the strongly regular graph built on the cosets of the dual code.
\end{abstract}

\maketitle

\section{Introduction}
The connection between two-weight codes and strongly regular graphs (SRGs) has been known since the seventies \cite{D}. In that landmark paper, a partial result on the values of the weights of such codes was derived. This result uses in an essential way a SRG defined on the codewords of the two-weight code. Building on that deep result, Calderbank \cite[Theorem 4.4]{Cald} was able to characterize the weights of the projective binary two-weight codes of dual distance at least four. The ingenious proof used a SRG on the cosets of the dual of the two-weight code. As noted in \cite{Cald}, the two SRGs are Delsarte dual to each other.
Codes over $\Z_{p^2},$ for the homogeneous distance have been studied in \cite{AT}.
More recently, the question of  homogeneous 2-weight codes over rings have received some attention \cite{Byrne1,E}. In particular an analogue of Delsarte weight result was derived in \cite{E}. The hypotheses are somewhat more technical requiring in particular the notions of regular and proper codes.

In the present work, building on the mentioned result in \cite{E} we show that regular two homogeneous weight $\Z_{p^k}$-codes where $p$ is odd prime and $k\geqslant 2$ with dual Hamming distance at least four do not exist. Note that the same result does not hold for $\Z_{2^k}$. In \cite{SE} we characterize the two-weight $\Z_{2^k}$-codes that satisfy the hypotheses.

The material is organized as follows. The next section recalls facts and definitions that we need for the following parts. In Section 3 and 4 we obtain the main result. Finally, Section 5 contains conclusion of the paper.

\section{Background}
\subsection{$\Z_{p^k}$-codes}
A linear code $C$ over the ring $R$ of length $n$ is an $R$-submodule of $R^n$.
\begin{definition}\label{4}
Let $R$ be a finite ring. A weight function $w: R\longrightarrow \Q$ is a {\em homogeneous weight}, if $w(0)=0$ and \\
$\T{(i)}$ if $Rx=Ry$ then $w(x)=w(y)$ for all $x,y \in R$.\\
$\T{(ii)}$ there exists a real number $\gamma$ such that $\Sigma_{y\in Rx} w(y)=\gamma \vert Rx \vert$ for all $x \in R\setminus\{0\}$.\\
\end{definition}
We can extend $w$ to a weight function on $R^n$ in the natural way:
$$w(x_1,\dots,x_n)=\sum_{i=1}^{n}w(x_i).$$

\begin{definition}
Let $C$ have  $\ell \times n$ generator matrix $G=[g_1\vert \dots \vert g_n]$ over $R$. The code $C$ is called:\\
$\T{(i)}$ proper for some weight function $w$ if $w(c)=0$ implies $c=0$ for all $c \in C$.\\
$\T{(ii)}$ regular if $\{x\cdot g_i : x \in R^{\ell}\}=R$ for $i=1,\dots,n$.\\
$\T{(ii)}$ projective if $g_i R\neq g_j R$ for any pair of distinct coordinates $i,j \in \{1,\dots,n\}$.
\end{definition}

A linear $\Z_{p^k}$-code of length $n$ is any $\Z_{p^k}$-submodule of $\Z_{p^k}^n.$ For simplicity we call a linear $\Z_{p^k}$-code, a $\Z_{p^k}$-code. The {\em homogeneous weight} for integer rings was introduced in \cite{C} and is defined on $\Z_{p^k}$ by
\begin{equation*}
w_h(x) = \left\{
\begin{array}{rl}
0  &  \text{if } x = 0,\\
p^{k-1}   & \text{if } 0\neq x \in p^{k-1}\Z_{p^k}, \\
(p-1)p^{k-2}  & \text{otherwise}.
\end{array} \right.
\end{equation*}
This weight can be expressed by a character sum. Recall that the {\em character} of a finite abelian group $G$ is a map $\chi: G\longrightarrow \C^*$ such that $\chi(x+y)=\chi(x)\chi(y)$.
\begin{definition}
Character $\chi$ is called a {\em generating character} if every character is of the form $x\mapsto \chi(ax)$ for some $a \in G$.
\end{definition}
It is well-known that for the ring $\Z_n$ the generating character is defined by $\chi(a)=\omega^a$, where $\omega$ is a primitive complex $n^{th}$ root of unity. The next theorem describes the homogeneous weight in terms of characters.
\begin{theorem}[\cite{T}]\label{6}
Let $R$ be a finite Frobenius ring with the generating character $\chi$ and set of units $R^\times.$ Then the homogeneous weights on $R$ are precisely
the functions $w: R \longrightarrow \R $ with $$x \mapsto \gamma \Big{(}1-\frac{1}{\vert R^{\times} \vert} \sum _{u \in R^\times}\chi(xu)\Big{)},$$ where $\gamma$ is like in Definition 2.1.
\end{theorem}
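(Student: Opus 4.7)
The plan is to verify directly that the function
$$w(x):=\gamma\Big(1-\frac{1}{|R^\times|}\sum_{u\in R^\times}\chi(xu)\Big)$$
satisfies axioms (i) and (ii) of Definition \ref{4}, and then to argue that these two axioms together with $w(0)=0$ determine $w$ uniquely once $\gamma$ is fixed. Plugging in $x=0$ gives $w(0)=\gamma(1-|R^\times|/|R^\times|)=0$. For axiom (i), in the (commutative) Frobenius setting of interest, $Rx=Ry$ implies $y=vx$ for some unit $v$; then the reindexing $u\mapsto vu$ of $R^\times$ shows $\sum_u\chi(yu)=\sum_u\chi(xvu)=\sum_u\chi(xu)$, so $w(y)=w(x)$.

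For axiom (ii), fix $x\neq 0$ and interchange the two sums:
$$\sum_{y\in Rx}w(y)=\gamma|Rx|-\frac{\gamma}{|R^\times|}\sum_{u\in R^\times}\sum_{y\in Rx}\chi(yu).$$
For any unit $u$ the map $y\mapsto yu$ is an additive automorphism of the abelian group $Rx$, so $\sum_{y\in Rx}\chi(yu)=\sum_{z\in Rx}\chi(z)$ independently of $u$. This latter quantity is the sum of a character over a finite abelian group, and vanishes unless $\chi|_{Rx}$ is trivial. But $\chi$ being a generating character of the Frobenius ring $R$ is equivalent to $\ker\chi$ containing no nonzero one-sided ideal; since $Rx\neq 0$, the restriction $\chi|_{Rx}$ is nontrivial and the inner sum vanishes. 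Hence the double sum vanishes and $\sum_{y\in Rx}w(y)=\gamma|Rx|$, which is axiom (ii).

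For the ``precisely'' clause, suppose $w_1,w_2$ are two homogeneous weights with the same constant $\gamma$; set $f=w_1-w_2$. By axiom (i), $f$ descends to a function $\tilde f$ on the set of principal left ideals of $R$, with $\tilde f(\{0\})=0$. Subtracting the identity in (ii) for $w_1$ and $w_2$ gives, for each nonzero principal ideal $I$, the relation
$$\sum_{J\subseteq I}N_J\,\tilde f(J)=0,\qquad N_J=|\{y\in R:Ry=J\}|.$$
This is an upper triangular linear system in the ideal-inclusion order: starting from the minimal nonzero ideals (where the equation reads $N_J\tilde f(J)=0$) and proceeding upward, one concludes $\tilde f\equiv 0$. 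Thus any two homogeneous weights with the same $\gamma$ coincide, and in particular the explicit formula displayed above is the unique one.

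The main technical inputs are two standard facts about Frobenius rings that I would cite rather than reprove: (a) $Rx=Ry$ implies $y$ and $x$ differ by a unit, and (b) the kernel of a generating character contains no nonzero one-sided ideal. Both are immediate for $R=\Z_{p^k}$, the case of actual interest in this paper, so in that setting the proof reduces to the character-sum manipulation of the first two paragraphs plus the triangular linear-algebra step.
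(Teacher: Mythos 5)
Your argument is correct, but be aware that the paper contains no internal proof of Theorem \ref{6} to compare against: the result is imported verbatim from Honold \cite{T}. What you wrote is essentially the standard (Honold-style) proof: existence by checking axioms (i) and (ii) of Definition \ref{4}, with axiom (ii) reduced to the vanishing of a character sum over a nonzero ideal on which the generating character is nontrivial, and uniqueness (the ``precisely'' clause) by the triangular linear system indexed by the poset of principal ideals, which is a Möbius-inversion argument in disguise. The two facts you cite are indeed standard: (a) $Rx=Ry$ forces $y=vx$ with $v$ a unit (true for all finite rings, trivial for $\Z_{p^k}$), and (b) the kernel of a generating character of a finite Frobenius ring contains no nonzero one-sided ideal. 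Two small points deserve attention. First, the theorem is stated for arbitrary finite Frobenius rings, while your verification of axiom (ii) uses commutativity when you assert that $y\mapsto yu$ is an automorphism of $Rx$; in the noncommutative case the image is $R(xu)$, still a nonzero left ideal, so the inner sum $\sum_{y\in Rx}\chi(yu)$ vanishes for each unit $u$ anyway and the argument survives with one line changed. Second, your uniqueness step implicitly uses that each $y\in I$ generates a principal ideal $Ry\subseteq I$ and that every generator of such a $J$ lies in $I$, which is what legitimizes the partition of $I$ underlying the identity $\sum_{J\subseteq I}N_J\,\tilde f(J)=0$; this is true but worth stating. For $R=\Z_{p^k}$, the only case the paper uses, everything you invoke is immediate, so your proof is a sound self-contained substitute for the citation.
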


It is easy to see that $\Z_{p^k}^{\times}=\{x \in \Z_{p^k} \mid (x, p^k)=1 \},$
so $\mid \Z_{p^k}^{\times} \mid =\varphi(p^k)=p^k-p^{k-1}$ where $\varphi$ is the Euler's phi function. By Definition \ref{4} (ii), we can take $\gamma=(p-1)p^{k-2}$ for $R=\Z_{p^k}$, thus by Theorem \ref{6}, the following lemma is immediate.
\begin{lemma}[\cite{T}]\label{8}
For any $x \in \Z_{p^k}$, we have $$w_h(x)=(p-1)p^{k-2}-\frac{1}{p} \sum _{u \in \Z_{p^k}^\times}\chi(xu).$$
\end{lemma}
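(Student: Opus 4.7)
The plan is to obtain the claimed identity directly from Theorem \ref{6} by substituting the numerical invariants specific to $R = \Z_{p^k}$. The text has already recorded that $\Z_n$ is a Frobenius ring with generating character $\chi(a)=\omega^a$, and that $|\Z_{p^k}^{\times}| = \varphi(p^k) = p^{k-1}(p-1)$. The only remaining numerical input Theorem \ref{6} requires is the value of the constant $\gamma$ from Definition \ref{4}(ii). So the whole proof reduces to (a) checking that $w_h$ actually is a homogeneous weight and pinning down $\gamma$, and (b) plugging into the character-sum formula.

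For step (a), condition (i) of Definition \ref{4} is immediate, because the ideals of $\Z_{p^k}$ are precisely the $p^j\Z_{p^k}$ and the three cases in the definition of $w_h$ are stable under multiplication by units. For condition (ii) I would fix a nonzero $x$, write $Rx = p^j\Z_{p^k}$ with $0\le j\le k-1$, and count: $Rx$ contains exactly $p-1$ nonzero multiples of $p^{k-1}$ (each of weight $p^{k-1}$) together with $p^{k-j}-p$ further nonzero elements (each of weight $(p-1)p^{k-2}$). Summing gives
\[
\sum_{y\in Rx} w_h(y) \;=\; (p-1)p^{k-1} + (p^{k-j}-p)(p-1)p^{k-2} \;=\; (p-1)\,p^{2k-j-2},
\]
which is exactly $(p-1)p^{k-2}\cdot |Rx|$. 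Hence $\gamma = (p-1)p^{k-2}$, uniformly in $j$.

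For step (b), Theorem \ref{6} now reads
\[
w_h(x) \;=\; (p-1)p^{k-2}\Bigl(1 - \frac{1}{(p-1)p^{k-1}} \sum_{u\in \Z_{p^k}^{\times}} \chi(xu)\Bigr),
\]
and the coefficient of the character sum simplifies to $\frac{(p-1)p^{k-2}}{(p-1)p^{k-1}} = \frac{1}{p}$, leaving precisely the stated formula.

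There is no real obstacle here: once Theorem \ref{6} is in hand the lemma is a substitution. The only point that demands a moment of care is the combinatorial count in step (a), which must be carried out uniformly in $j$ so that the resulting $\gamma$ is genuinely independent of $x$; the edge case $j=k-1$ (where the second group of elements is empty) should be noted separately, but it produces the same value $(p-1)p^{k-2}$.
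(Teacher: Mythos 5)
Your proposal is correct and follows the paper's own route: the paper likewise obtains the lemma by substituting $\gamma=(p-1)p^{k-2}$ and $|\Z_{p^k}^{\times}|=(p-1)p^{k-1}$ into Theorem \ref{6} and simplifying the coefficient to $\frac{1}{p}$. The only difference is that you explicitly verify, by counting weights on each ideal $p^j\Z_{p^k}$, that $\gamma=(p-1)p^{k-2}$, a value the paper simply asserts from Definition \ref{4}(ii).
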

\begin{proposition}[\cite{E}, Corollary 16]\label{5}
Let $C$ be a proper, regular, projective two-weight code with nonzero positive integer weights $w_1< w_2$.
Then there exists a positive integer $d$, a divisor of $\vert C \vert$, and positive integer $t$ such that $w_1=dt$ and $w_2=d(t+1)$.
\end{proposition}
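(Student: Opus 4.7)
The plan is to recover the Delsarte transfer theorem from projective two-weight codes to strongly regular graphs (SRGs) in the Frobenius-ring setting of \cite{E}: attach to $C$ a graph whose three eigenvalues encode $w_1,w_2$, and then exploit the integrality of SRG eigenvalues.

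First, I would form the Cayley graph $\Gamma$ on the additive group $(C,+)$ with connection set $S=\{c\in C:w(c)=w_1\}$. Properness gives $0\notin S$, so $\Gamma$ has no loops; and since $R(-c)=Rc$, Definition~\ref{4}(i) yields $w(-c)=w(c)$, so $S$ is symmetric and $\Gamma$ is undirected. The real content will be to arrange that $\Gamma$ is strongly regular.

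Next, I would compute the spectrum of $\Gamma$. For an abelian Cayley graph the eigenvalues are the character sums $\theta_\psi=\sum_{s\in S}\psi(s)$ indexed by $\psi\in\widehat{C}$. Writing $w(c)$ as a character sum via Lemma~\ref{8} (applied coordinatewise) and using the \emph{regular} and \emph{projective} hypotheses to make the coordinate contributions behave uniformly, one shows that $\theta_\psi$ takes exactly two values $\theta_1,\theta_2$ on nontrivial $\psi$. Hence $\Gamma$ has three distinct eigenvalues and is strongly regular, with parameters expressible in terms of $w_1,w_2,|C|$ and the length $n$.

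Finally, the two restricted eigenvalues of an SRG are roots of a monic integer quadratic, so the integer-valued weight hypothesis forces $\theta_1,\theta_2\in\Z$ (the conference case, whose half-integral eigenvalues would contradict integrality of the SRG parameters here, does not arise). Expressing $\theta_1,\theta_2$ as explicit rational functions of $w_1,w_2,|C|,n$ and requiring them to be integers pins down $d:=w_2-w_1$ as a common divisor of $w_1$ and $|C|$; setting $t=w_1/d$ then gives $w_1=dt$ and $w_2=d(t+1)$ as claimed.

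The main obstacle is the second step: verifying that the character-sum expansion of the indicator $1_S$ really collapses to only two values on nontrivial characters, with properness, regularity, and projectivity each playing a distinct role in making the collapse happen. Once this is in hand the divisibility conclusion is essentially forced, and the remainder of the argument is arithmetic bookkeeping with the SRG eigenvalue formulas.
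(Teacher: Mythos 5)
First, a framing point: the paper does not prove Proposition~\ref{5} at all --- it is imported verbatim from \cite{E} (Corollary 16) and used as a black box (the only in-paper construction in this spirit is the Delsarte-dual one, namely the syndrome graph $\Gamma(C^{\perp})$ on the cosets of the dual in Section 3, not your Cayley graph on the codewords). Judged on its own, your proposal is an outline of the standard Delsarte/Byrne--Greferath--Honold strategy, but it has a genuine gap exactly where you flag it: you never establish that the character sums $\theta_\psi=\sum_{s\in S}\psi(s)$, with $S=\{c\in C:\,w(c)=w_1\}$, take only two values on nontrivial $\psi\in\widehat{C}$, and --- just as importantly --- you never compute what those two values are. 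That step is the entire content of the cited result: one must (i) use the Frobenius structure to realize every character of $(C,+)$ as $c\mapsto\chi(x\cdot c)$ for some $x$ in the ambient module, (ii) convert $\sum_{c\in C}\psi(c)\,w(c)$ via the Lemma~\ref{8}-type identity into a count $m_\psi$ of pairs (coordinate, unit) on which the twisted character is trivial on $C$, and (iii) invoke regularity and projectivity to force $m_\psi$ into exactly two values. Only then does one obtain closed formulas of the shape $\theta_\psi=\bigl(\gamma|C|m_\psi/|R^{\times}|-w_2\bigr)/(w_2-w_1)$, and it is from these explicit formulas --- not from bare integrality of SRG eigenvalues --- that the arithmetic conclusion follows: integrality of the $m_\psi=0$ eigenvalue forces $(w_2-w_1)\mid w_2$, giving $w_1=dt$, $w_2=d(t+1)$ with $d=w_2-w_1$, and integrality of the other eigenvalue is what yields $d\mid |C|$. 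Without (ii)--(iii) your final step (``requiring them to be integers pins down $d$ as a common divisor of $w_1$ and $|C|$'') cannot be verified, because there is no expression for $\theta_1,\theta_2$ to which integrality can be applied; so the divisibility claim is asserted, not derived.

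Two smaller points. Properness is not what gives $0\notin S$ (that follows from $w(0)=0<w_1$); properness is needed so that the nonzero codewords split exactly into the two weight classes $w_1$ and $w_2$ (over a general Frobenius ring the homogeneous weight may vanish on nonzero elements), which is precisely what your ``complement of $S\cup\{0\}$'' bookkeeping uses. And the dismissal of the conference case should be argued from the fact that the eigenvalues are rational by construction (hence integral as algebraic integers), rather than from ``integrality of the SRG parameters.'' These are repairable; the substantive defect is the unproven two-value collapse and the missing eigenvalue formulas on which the whole divisibility argument rests.
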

\begin{theorem} \label{byrne}
If $C$ is a regular projective homogeneous two weight $\Z_{p^k}$-code with weights $w_1<w_2,$ then there are integers $u$ and $t$ such that
\begin{eqnarray*}
w_1&=&u p^t\\
w_2&=&(u+1) p^t.
\end{eqnarray*}
\end{theorem}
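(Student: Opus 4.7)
The plan is to deduce this directly from Proposition~\ref{5}, which already has essentially the same form of conclusion but with an unknown divisor $d$ of $|C|$ in place of a power of $p$. The theorem should follow in two short steps: first, verify the hypotheses of Proposition~\ref{5} for $C$; second, upgrade the divisor $d$ to a power of~$p$ using the ring structure.

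For the first step, regularity and projectivity of $C$ are assumed, and $C$ is a two-weight code, so the only additional hypothesis in Proposition~\ref{5} to check is that $C$ be proper, i.e.\ that $w_h(c)=0$ implies $c=0$. This is automatic in our setting: every nonzero codeword of a two-weight code with weights $w_1<w_2$ has weight equal to $w_1>0$ or to $w_2>0$, so $w_h$ vanishes on $C$ only at the zero codeword. Applying Proposition~\ref{5} then produces a positive integer $d$ dividing $|C|$ and a positive integer $s$ with $w_1=ds$ and $w_2=d(s+1)$.

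For the second step, I would use that $C$ is a $\Z_{p^k}$-submodule of $\Z_{p^k}^n$, hence in particular a finite abelian $p$-group, so $|C|=p^m$ for some integer $m\geqslant 0$. Every positive divisor of $p^m$ is of the form $p^t$ with $0\leqslant t\leqslant m$, so the $d$ supplied by Proposition~\ref{5} can be written as $d=p^t$. Setting $u=s$ and substituting gives $w_1=u\,p^t$ and $w_2=(u+1)\,p^t$, which is the desired conclusion. There is no real obstacle: the theorem is essentially a specialisation of Proposition~\ref{5} to $R=\Z_{p^k}$, the only observations needed being that ``proper'' is free for a two-weight code with two nonzero weights and that positive divisors of a $p$-power are themselves $p$-powers.
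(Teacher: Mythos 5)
Your proof is correct and takes essentially the same route as the paper, which also obtains the theorem as a direct specialisation of Proposition~\ref{5}, with properness noted to be automatic and the divisor of $|C|$ forced to be a power of $p$. The only (cosmetic) difference is the justification of properness: the paper invokes the positive definiteness of the homogeneous weight $w_h$ on $\Z_{p^k}$, which is also what guarantees $w_1>0$, whereas you argue from the positivity of the two weights, implicitly relying on that same fact.
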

\begin{proof}
It is a special case of Proposition \ref{5}. Note that the homogeneous weight is positive definite, a fact which makes every $\Z_{p^k}$-code proper.
\end{proof}
The following lemma is essential for Section 3.
\begin{lemma}[\cite{SAN}, Lemma 4.5.4 ]\label{7}\label{element}
Suppose that $C$ is a linear code and $H$ is the parity check matrix for $C$. An vector $v$ is in $C$ iff $vH^T=0$.
\end{lemma}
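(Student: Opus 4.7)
The plan is to recall the definition of the parity-check matrix and verify the two implications directly. By definition, the rows $h_1,\dots,h_r$ of $H$ generate the dual code $C^\perp\subseteq R^n$ (equivalently, they describe the linear constraints cutting out $C$), and the $i$-th coordinate of the row vector $vH^T$ is the standard bilinear pairing $v\cdot h_i=\sum_{j}v_j(h_i)_j$. Hence the single vector equation $vH^T=0$ is equivalent to $v\cdot h_i=0$ for every $i=1,\dots,r$.

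For the forward implication, if $v\in C$ then $v\cdot w=0$ for every $w\in C^\perp$ by the very definition of $C^\perp$; applying this to $w=h_i$ gives $v\cdot h_i=0$ for each row of $H$, so $vH^T=0$. For the reverse implication, if $vH^T=0$ then $v\cdot h_i=0$ for every $i$, and by $R$-linearity of the pairing this extends to $v\cdot w=0$ for every $w$ in the $R$-span of the rows, that is, for every $w\in C^\perp$. This says $v\in(C^\perp)^\perp$.

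The only step that goes beyond unwinding definitions is the double-annihilator identity $(C^\perp)^\perp=C$, which I would expect to be the point a careful reader questions. This identity is classical for linear codes over Frobenius rings, and in particular over $\Z_{p^k}$ (the setting of the paper), so it can be invoked without further argument; MacWilliams-type duality for Frobenius rings gives $|C|\cdot|C^\perp|=|R|^n$ and the inclusion $C\subseteq(C^\perp)^\perp$ is automatic, forcing equality by a cardinality count. Combining this with the equivalence of the previous paragraph yields $v\in C \iff vH^T=0$.
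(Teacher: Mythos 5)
Your proof is correct, and in fact the paper offers no proof of its own here: the lemma is simply quoted from the textbook reference [SAN, Lemma 4.5.4], whose standard argument is exactly the one you give (rows of $H$ generate $C^\perp$, so $vH^T=0$ means $v\in(C^\perp)^\perp$, which equals $C$). The one genuinely useful thing you add is the care taken with the double-annihilator step: the cited source proves $(C^\perp)^\perp=C$ only for codes over finite fields via a dimension count, whereas the paper applies the lemma to codes over $\Z_{p^k}$, where that identity instead follows from Frobenius-ring (MacWilliams) duality, i.e.\ $|C|\cdot|C^\perp|=|R|^n$ together with the automatic inclusion $C\subseteq(C^\perp)^\perp$ — precisely as you argue. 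So your write-up is not just correct but slightly more general than the citation, and it closes the small gap between the field setting of [SAN] and the ring setting in which the lemma is actually used.
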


\subsection{Strongly regular graphs}
A simple graph of order $v$ is called a {\em strongly regular graph} with parameters $(v, \eta, \lambda, \mu)$ whenever it is not complete or edgeless and \\
(i) each vertex is adjacent to $\eta$ vertices,\\
(ii) for each pair of adjacent vertices, there are $\lambda$ vertices adjacent to both,\\
(iii) for each pair of non-adjacent vertices, there are $\mu$ vertices adjacent to both.

An {\em eigenvalue} of a graph, is any eigenvalue of its adjacency matrix. We will call an eigenvalue of $\Gamma$  {\em restricted} if it has an eigenvector which is not a multiple of the all ones vector \textbf{1}. Note that for a $k$-regular connected graph, the restricted eigenvalues are simply the eigenvalues different from $k$.
\begin{theorem}[\cite{B}, Theorem 9.1.2]\label{1}
For a simple graph $\Gamma$, not complete or edgeless, with adjacency matrix $A$, the following are equivalent:\\
$\T{(i)}$ $\Gamma$ is a strongly regular graph.\\
$\T{(ii)}$ $A$ has precisely two distinct restricted eigenvalues.
\end{theorem}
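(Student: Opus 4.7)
The plan is a spectral translation: the SRG axioms boil down to a polynomial identity in the adjacency matrix $A$, and that identity is equivalent to $A$ having precisely two restricted eigenvalues. I would prove the two implications separately.

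For (i) $\Rightarrow$ (ii), the strategy is to compute $A^2$ entrywise. Its $(i,i)$ entry is the degree $\eta$, the $(i,j)$ entry with $i \sim j$ is $\lambda$ (common neighbors of an adjacent pair), and the $(i,j)$ entry with $i \neq j$ and $i \not\sim j$ is $\mu$. This translates to
\[
A^2 = \eta I + \lambda A + \mu(J - I - A),
\]
equivalently $A^2 - (\lambda-\mu)A - (\eta-\mu)I = \mu J$. Since $\Gamma$ is $\eta$-regular, $A\mathbf{1} = \eta\mathbf{1}$, and for any $v \perp \mathbf{1}$ we have $Jv = 0$, so every restricted eigenvalue is a root of the quadratic $p(x) = x^2 - (\lambda-\mu)x - (\eta-\mu)$. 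This yields at most two distinct restricted eigenvalues; excluding the complete and edgeless cases is what forces the two roots to both occur and be distinct.

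For (ii) $\Rightarrow$ (i), let $r$ and $s$ be the two restricted eigenvalues and consider $M = (A-rI)(A-sI)$. Since $A$ is symmetric and hence diagonalizable, $M$ vanishes on the two restricted eigenspaces, whose combined codimension is at most $1$. A rank argument, combined with symmetry of $M$, then forces $M = \alpha J$ for some scalar $\alpha$, hence
\[
A^2 = (r+s)A - rs\, I + \alpha J.
\]
Taking row sums of both sides shows that $A$ must have constant row sums, i.e.\ $\Gamma$ is regular, say of valency $\eta$, and moreover $\eta$ solves $\eta^2 - (r+s)\eta + rs = \alpha v$. Reading off the off-diagonal entries of the identity then shows the number of common neighbors of any two distinct vertices depends only on whether they are adjacent, giving constants $\lambda = \alpha + (r+s)$ for adjacent pairs and $\mu = \alpha$ for non-adjacent pairs, which is exactly the SRG property.

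The main obstacle is the regularity step in the reverse direction: a priori, having two restricted eigenvalues does not even guarantee that $\mathbf{1}$ is an eigenvector of $A$, so the passage from the polynomial identity $M = \alpha J$ to constant row sums needs care. Once regularity is established, both directions reduce to bookkeeping on entries of $A^2$, and the hypothesis that $\Gamma$ is neither complete nor edgeless eliminates the trivial cases where $A$ has only a single restricted eigenvalue.
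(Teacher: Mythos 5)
The paper does not prove this statement at all: it is quoted from [B, Theorem 9.1.2] with a citation, so your proposal can only be compared with the standard textbook argument, which it follows in outline (forward direction by computing $A^2$ entrywise and restricting the resulting quadratic to eigenvectors orthogonal to $\mathbf{1}$; reverse direction via $(A-rI)(A-sI)=\alpha J$). Most of the outline is sound: under the paper's definition of ``restricted'', a non-restricted eigenvalue must have eigenspace exactly $\mathrm{span}(\mathbf{1})$, so at most one such eigenvalue exists and it is simple; hence the two restricted eigenspaces have combined codimension at most $1$, and since $M=(A-rI)(A-sI)$ is symmetric with $\ker M$ containing them, $\mathrm{im}\,M\subseteq \mathrm{span}(\mathbf{1})$ and indeed $M=\alpha J$. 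The parameter bookkeeping $\lambda=\alpha+(r+s)$, $\mu=\alpha$, $\eta^2-(r+s)\eta+rs=\alpha v$ is also correct, as is the remark that excluding complete and edgeless graphs rules out a single restricted eigenvalue.

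The one step that fails as stated is precisely the one you flag as the crux: regularity. Taking row sums of $A^2=(r+s)A-rs\,I+\alpha J$ only yields $A\mathbf{d}=(r+s)\mathbf{d}+(\alpha v-rs)\mathbf{1}$, where $\mathbf{d}=A\mathbf{1}$ is the degree vector; equality of row sums of the two sides does not imply that the row sums of $A$ are constant, so ``taking row sums shows $\Gamma$ is regular'' is a non sequitur. The correct cheap extraction is the diagonal: $(A^2)_{ii}=\deg(i)$, while the right-hand side has constant diagonal $\alpha-rs$ because $A_{ii}=0$, so $\Gamma$ is regular of valency $\eta=\alpha-rs$. (Alternatively: if $\alpha\neq 0$ then $J=\alpha^{-1}M$ is a polynomial in $A$, hence $AJ=JA$, which forces all degrees equal; if $\alpha=0$ then $A$ has just two eigenvalues and the diagonal argument applies directly.) With regularity obtained this way, the rest of your argument goes through; a similarly small point in the forward direction is that a restricted eigenvalue is defined by having an eigenvector that is not a multiple of $\mathbf{1}$, so you should note (using symmetry of $A$ and regularity) that such an eigenvalue always has an eigenvector orthogonal to $\mathbf{1}$ before substituting $Jv=0$.
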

\begin{theorem}[\cite{B}, Theorem 9.1.3]\label{3}
Let $\Gamma$ be a strongly regular graph with adjacency matrix $A$ and parameters $(v, \eta, \lambda, \mu)$. Let $r>s$ be the restricted eigenvalues of $A$. Then:
\begin{equation}
rs=\mu-\eta,
\end{equation}
and
\begin{equation}
r+s=\lambda-\mu.
\end{equation}
By these two relations, it is easy to show that:
\begin{equation}
(r-s)^2=(\lambda-\mu)^2+4(\eta-\mu).
\end{equation}
\end{theorem}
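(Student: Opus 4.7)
The plan is to translate the combinatorial definition of a strongly regular graph into a single matrix identity in $A$ and $J$, and then read off the eigenvalue relations via Vieta's formulas.

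First I would observe that the three SRG axioms exactly describe the entries of $A^{2}$: its diagonal entries equal $\eta$ (the number of length-two walks from a vertex back to itself is its degree), entries indexed by adjacent pairs equal $\lambda$, and entries indexed by distinct non-adjacent pairs equal $\mu$. This can be encoded as
\[
A^{2} \;=\; \eta I + \lambda A + \mu (J - I - A),
\]
which I would rearrange as
\[
A^{2} - (\lambda - \mu) A - (\eta - \mu) I \;=\; \mu J.
\]

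Next I would apply this identity to a restricted eigenvector. Since $\Gamma$ is $\eta$-regular, one has $A\mathbf{1} = \eta \mathbf{1}$; because $A$ is symmetric, any eigenvector for an eigenvalue different from $\eta$ can be chosen orthogonal to $\mathbf{1}$, and is therefore annihilated by $J$. If $x$ is such a vector with $Ax = \theta x$, the displayed identity forces
\[
\theta^{2} - (\lambda - \mu) \theta - (\eta - \mu) \;=\; 0.
\]
By Theorem \ref{1} there are exactly two restricted eigenvalues $r > s$, and both must be roots of this quadratic. Vieta's formulas then give $r + s = \lambda - \mu$ and $rs = -(\eta - \mu) = \mu - \eta$, which are the first two claimed identities. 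The third is the purely algebraic identity $(r - s)^{2} = (r + s)^{2} - 4rs$ applied to these values, yielding $(r-s)^{2} = (\lambda-\mu)^{2} + 4(\eta-\mu)$.

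The step requiring the most care is the passage from \emph{restricted} eigenvector in the sense of Theorem \ref{1} (not a multiple of $\mathbf{1}$) to an eigenvector \emph{orthogonal} to $\mathbf{1}$, which relies on the spectral theorem for the symmetric matrix $A$ together with the fact that $\mathbf{1}$ spans the $\eta$-eigenspace in each connected component. One also needs that both roots of the quadratic are actually realized as eigenvalues so that Vieta is applicable; this is exactly the content of Theorem \ref{1} in the other direction.
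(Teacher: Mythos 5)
Your proof is correct, but note that the paper does not prove this statement at all: it is quoted verbatim from the reference \cite{B} (Brouwer--Haemers, Theorem 9.1.3), so there is no internal proof to compare against. Your argument --- encoding the SRG axioms as $A^{2}=\eta I+\lambda A+\mu(J-I-A)$, applying it to eigenvectors orthogonal to $\mathbf{1}$, and reading off $r+s=\lambda-\mu$, $rs=\mu-\eta$ via Vieta, with the third identity following from $(r-s)^2=(r+s)^2-4rs$ --- is exactly the standard proof given in that cited source. The one point worth stating explicitly is the case you flag at the end: if $\Gamma$ is disconnected, $\eta$ itself is a restricted eigenvalue in the paper's sense, but then the $\eta$-eigenspace has dimension at least two and hence still contains a nonzero vector orthogonal to $\mathbf{1}$, so every restricted eigenvalue satisfies the quadratic $\theta^{2}-(\lambda-\mu)\theta-(\eta-\mu)=0$ and your Vieta step (which only needs that the two distinct restricted eigenvalues are roots of this quadratic) goes through unchanged.
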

\begin{definition}\label{2}
Let $T$ be a finite group and $S\subseteq T$ be a subset. The corresponding {\em Cayley graph} $C(T,S)$ has vertex set equal to $T$, and two vertices
$g, h \in T$ are adjacent iff $g-h\in S$. The graph is connected iff $S$ generates $T$. Also if $0_T \notin S$ and $-S\subseteq S$ the graph will be loopless and undirected. The graph $C(T,S)$ is a regular graph with degree $\vert S \vert$.
\end{definition}

\section{Syndrome graph}
Suppose that $C$ is a $\Z_{p^k}$-code of length $n$. We define the graph $\Gamma(C)$ as a graph whose vertices are the syndromes
of $C$ and two syndromes are adjacent if they differ by $u$ times a column of the parity check matrix of $C$, where $u \in \Z_{p^k}^\times$:
$$xH^T\sim yH^T \Longleftrightarrow (x-y)H^T=uh_i,$$
where $H=[h_1\mid \dots\mid h_n]$ is the parity check matrix of $C$ and $i=1,\dots,n$.
This graph is a Cayley graph with generator set $S=\{uh_i: u \in \Z_{p^k}^\times, \T{and} \ i=1,\dots,n\}$. As $0 \notin S$ and $-S\subseteq S$  this graph is simple. The graph $\Gamma(C)$ is regular with degree $\vert S \vert=p^{k-1}(p-1)n$ on $\frac{p^{kn}}{\vert C \vert}$ vertices.

It is well-known that there is a one-to-one correspondence between syndromes of a code and its cosets \cite{SAN},
thus this graph can be defined on the cosets of $C$. Suppose that $xH^T \sim yH^T$ in $\Gamma(C)$. Hence
$(x-y)H^T=uh_i$ for some $u \in \Z_{p^k}^\times$ and $i=1,\dots,n$. Consider $a=(a_1,\dots,a_n)$ such that $a_i=u$ and for $j\neq i$, $a_j=0$. So $(x-y)H^T=aH^T$. Therefore two cosets $x+C$ and $y+C$ are adjacent if and only if their difference is a coset $a+C$,
where $a=(a_1,\dots,a_n)$ satisfies $a_i=u$ and $a_j=0$ for $j\neq i$. In this form, the generator set is
$$S=\{a+C: \exists \ i \ \T{such \ that} \ a_i=u\in \Z_{p^k}^\times\  \T{and \ for} \ j\neq i, a_j=0\}.$$

\begin{theorem}\label{10}
Suppose that $C$ is a $\Z_{p^k}$-code with dual homogeneous weights $w_i$ with respective multiplicity $m_i$. Then the eigenvalues of $\Gamma(C)$ are $n(p-1)p^{k-1}-pw_i$ with multiplicity $m_i$.
\end{theorem}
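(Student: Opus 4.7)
The plan is to exploit the Cayley graph structure of $\Gamma(C)$. Since $\Gamma(C)$ is a Cayley graph on the abelian group $T = \Z_{p^k}^n/C$ with connection set $S = \{a+C : a_i = u \in \Z_{p^k}^\times \text{ for some } i,\ a_j=0 \text{ for } j\neq i\}$, its eigenvalues are indexed by the characters $\psi$ of $T$, with
\[
\lambda_\psi = \sum_{s \in S} \psi(s).
\]
Characters of $T$ correspond bijectively to codewords of the dual code $C^\perp$: for each $c = (c_1,\dots,c_n) \in C^\perp$, the map $\psi_c(x+C) := \chi(c\cdot x)$ is well-defined on $T$, where $\chi$ is the generating character of $\Z_{p^k}$.

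The key step is then to compute $\lambda_{\psi_c}$ explicitly. Parametrizing $S$ by the pair $(i,u)$ with $i \in \{1,\dots,n\}$ and $u \in \Z_{p^k}^\times$, one gets
\[
\lambda_{\psi_c} = \sum_{i=1}^n \sum_{u \in \Z_{p^k}^\times} \chi(u c_i).
\]
Here I would apply Lemma \ref{8}, which rearranges as
\[
\sum_{u \in \Z_{p^k}^\times} \chi(u c_i) = (p-1)p^{k-1} - p\, w_h(c_i).
\]
Substituting and using additivity of the homogeneous weight on $\Z_{p^k}^n$ yields
\[
\lambda_{\psi_c} = n(p-1)p^{k-1} - p\sum_{i=1}^n w_h(c_i) = n(p-1)p^{k-1} - p\, w_h(c).
\]

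Finally, I would group the characters by the homogeneous weight of the corresponding dual codeword: since $|C^\perp| = p^{kn}/|C|$ equals the order of $T$, all characters of $T$ are accounted for. A codeword $c \in C^\perp$ of weight $w_i$ contributes the eigenvalue $n(p-1)p^{k-1} - p w_i$, so this eigenvalue has multiplicity $m_i$. The only step that needs care is checking that $\psi_c$ is well-defined on the quotient $T$ — this is exactly Lemma \ref{element} together with the definition of $C^\perp$, so it is essentially routine. No real obstacle is expected; the identity in Lemma \ref{8} is precisely the bridge that converts a character sum over units into the homogeneous weight.
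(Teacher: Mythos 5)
Your proposal is correct and takes essentially the same route as the paper: the eigenvectors are the characters $\psi_c$ of $\Z_{p^k}^n/C$ indexed by dual codewords $c\in C^{\perp}$, and Lemma \ref{8} converts the character sum $\sum_{s\in S}\psi_c(s)$ into $n(p-1)p^{k-1}-p\,w_h(c)$. You are in fact a bit more explicit than the paper about the multiplicity bookkeeping (well-definedness of $\psi_c$ on the quotient and the count $|C^{\perp}|=p^{kn}/|C|$, to which one should add injectivity of $c\mapsto\psi_c$), which the paper leaves implicit.
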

\begin{proof}
For any $x \in C^{\perp}$, we construct the eigenvector $e_x$ of the adjacency matrix $A$ of $\Gamma(C)$ by $(e_x)_y=\chi_x(y)$ where $x=(x_1,\dots,x_n)$
and
$$\chi_x:\frac{\Z_{p^k}^n}{C}\longrightarrow \C $$ $$\chi_x((a_1,\dots,a_n)+C)=\omega ^{(x_1,\dots,x_n)(a_1,\dots,a_n)}.$$
Since $A\cdot(e_x)_y=\Big{(}\sum_{s \in S} \chi_x(s) \Big{)} (e_x)_y$,  the corresponding eigenvalue is $\sum_{s \in S} \chi_x(s)$ (\cite{B}, Page 11). Now, by using Lemma \ref{8}, we have:
\begin{eqnarray*}
w_h(x)&=&\sum_{i=1}^{n}w_h(x_i)=\sum_{i=1}^{n}\Big{(}(p-1)p^{k-2}-\frac{1}{p} \sum _{u \in \Z_{p^k}^\times}\chi(x_iu)\Big{)}\\
      &=&n(p-1)p^{k-2}-\frac{1}{p}\sum_{s \in S} \chi(xs),
\end{eqnarray*}
thus
$\sum_{s \in S} \chi(xs)=n(p-1)p^{k-1}-pw_h(x)$.
\end{proof}

\begin{theorem}\label{11}
Suppose that $C$ is a two-weight code with weights $w_1$ and $w_2$. The coset graph $\Gamma(C^{\perp})$ of  $C^{\perp}$ is a strongly regular graph with degree  $p^{k-1}(p-1)n$ and eigenvalues $K(w_1)$ and $K(w_2)$, where $K(x)=n(p-1)p^{k-1}-px$.
\end{theorem}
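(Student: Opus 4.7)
The plan is to deduce this theorem directly from Theorem \ref{10} applied to the dual code. The key maneuver is to substitute $C^\perp$ for $C$ in the hypothesis of that theorem, and then invoke the spectral characterization of strongly regular graphs from Theorem \ref{1}.

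First I would invoke the reflexivity $(C^\perp)^\perp = C$, which holds because $\Z_{p^k}$ is a finite Frobenius ring. This identifies the ``dual homogeneous weights'' of $C^\perp$ with the weights occurring in $C$: namely $0$ (attained only by the zero codeword, with multiplicity one), $w_1$, and $w_2$. Plugging into Theorem \ref{10}, the eigenvalues of $\Gamma(C^\perp)$ are therefore exactly $K(0) = n(p-1)p^{k-1}$, $K(w_1)$, and $K(w_2)$, where $K(x) = n(p-1)p^{k-1} - px$. The value $K(0)$ coincides with the degree $|S| = p^{k-1}(p-1)n$ of the Cayley graph (as noted after Definition \ref{2}) and corresponds to the all-ones eigenvector, so it is the trivial (unrestricted) eigenvalue.

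Since $w_1 \ne w_2$ and both weights are nonzero, the numbers $K(w_1)$ and $K(w_2)$ are two distinct restricted eigenvalues of $\Gamma(C^\perp)$. A two-weight code is neither trivial nor the whole ambient module, so the Cayley graph is neither edgeless nor complete. The equivalence in Theorem \ref{1} then forces $\Gamma(C^\perp)$ to be strongly regular with the asserted degree and the asserted eigenvalues $K(w_1), K(w_2)$.

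There is no serious obstacle: the statement is essentially a bookkeeping corollary of Theorem \ref{10}. The only points demanding a line of comment are the Frobenius reflexivity $(C^\perp)^\perp = C$ (so that the dual weights of $C^\perp$ are indeed $0, w_1, w_2$) and the observation that $K(0)$ is precisely the degree and hence absorbs the trivial eigenvalue, leaving exactly the two restricted eigenvalues needed to trigger Theorem \ref{1}.
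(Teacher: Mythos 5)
Your proposal is correct and follows essentially the same route as the paper: apply Theorem \ref{10} with $C^{\perp}$ in place of $C$, observe that the two weights of $C=(C^{\perp})^{\perp}$ give exactly two restricted eigenvalues $K(w_1),K(w_2)$, and conclude via the spectral characterization in Theorem \ref{1}. If anything, you are more careful than the paper's own two-line argument, since you explicitly account for the trivial eigenvalue $K(0)$ equalling the degree and for the reflexivity $(C^{\perp})^{\perp}=C$ over the Frobenius ring $\Z_{p^k}$.
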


\begin{proof}
In the proof of Theorem \ref{10}, for each weight of the dual code we get an eigenvalue for the code in the form of $n(p-1)p^{k-1}-pw_i$. Since $C$ is a two-weight code, the graph $\Gamma(C^{\perp})$ has exactly two eigenvalues $K(w_i)=n(p-1)p^{k-1}-pw_i$ for $i=1,2$. Now, by applying Theorem \ref{1},
this graph is a strongly regular graph.
\end{proof}

\section{Main Resultes}
We need the following lemma to determine the value of $\lambda$.
\begin{lemma} \label{21}
Every element of $\Z_{p^k}^\times$ can be written in $\frac{1}{2}(p^k-2p^{k-1}+1)$ ways as a sum of two other elements in $\Z_{p^k}^\times$.
\end{lemma}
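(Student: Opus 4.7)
The plan is to fix an arbitrary $a \in \Z_{p^k}^\times$ and count ordered decompositions $a = x + y$ with $x, y \in \Z_{p^k}^\times$, then convert to unordered decompositions.

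First I would count ordered pairs. The constraint $x \in \Z_{p^k}^\times$ amounts to $p \nmid x$, and $y = a - x \in \Z_{p^k}^\times$ amounts to $p \nmid (a-x)$, i.e., $x \not\equiv a \pmod{p}$. Because $a$ itself is a unit, $a \not\equiv 0 \pmod p$, so the two forbidden residues $0$ and $a$ are distinct modulo $p$. Hence $x$ must lie in one of $p - 2$ residue classes modulo $p$, and each such class contains exactly $p^{k-1}$ elements of $\Z_{p^k}$. This gives $(p-2)p^{k-1}$ ordered decompositions.

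Next I would pass to unordered pairs. The issue is the diagonal: ordered pairs $(x,y)$ with $x \neq y$ come in pairs $\{(x,y),(y,x)\}$, while any pair with $x = y$ is counted only once. Because $p$ is odd, $2$ is a unit in $\Z_{p^k}$, so the equation $2x = a$ has the unique solution $x = a/2$, which is itself a unit since $a$ is. Thus there is exactly one ordered decomposition with $x = y$. The number of unordered decompositions is therefore
\begin{equation*}
\frac{(p-2)p^{k-1} - 1}{2} + 1 = \frac{(p-2)p^{k-1} + 1}{2} = \frac{p^k - 2p^{k-1} + 1}{2},
\end{equation*}
which is the claimed value.

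There is essentially no obstacle here beyond correctly interpreting "two other elements" as an unordered pair of units summing to $a$ and remembering to treat the fixed-point $x = y = a/2$ separately. The oddness of $p$ enters precisely to guarantee that $a/2$ exists and is a unit, so that exactly one diagonal contribution occurs; for $p = 2$ the count would differ, which is consistent with the paper's observation that the main theorem fails over $\Z_{2^k}$.
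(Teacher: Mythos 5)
Your proof is correct and follows essentially the same route as the paper: count the $(p-2)p^{k-1}=p^k-2p^{k-1}$ ordered decompositions into two units, then pass to unordered pairs using the single symmetric decomposition. You are in fact slightly more careful than the paper, which handles the diagonal term $x=y=a/2$ only implicitly in its ``counted twice'' adjustment, whereas you justify its existence and uniqueness via the invertibility of $2$ for odd $p$.
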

\begin{proof}
It is clear that any element $t \in \Z_{p^k}^\times$ can be written as $$t=(p^k-\ell)+(\ell+t),$$ where $\ell=0,\dots,p^k-1$.
Now, we want to count the number of $\ell$'s such that two elements $(p^k-\ell)$ and $(\ell+t)$ are in $\Z_{p^k}^\times$. The element $p^k-\ell$ is an element of
$\Z_{p^k}^\times$ if $p\nmid \ell$ or in other words $(p^k,\ell)=1$. The number of $\ell$'s such that $(p^k,\ell)=1$ is equal to $p^k-p^{k-1}$.

Note that some of $\ell$ are in $\Z_{p^k}^\times$ but $\ell+t \notin \Z_{p^k}^\times$; we want to remove these elements. The number of $\ell$ such that $\ell$ is in $\Z_{p^k}^\times$ but $\ell+t$ not in $\Z_{p^k}^\times$, is equal to $\mid p\Z_{p^k} \mid =p^{k-1}$.
So there exist $(p^k-2p^{k-1})$ numbers of $\ell$'s such that $(p^k-\ell)$ and $(\ell+t)$ are in $\Z_{p^k}^\times$.

But some terms are counted twice, because we can exchange the roles of $p^k-\ell$ and $\ell+t$, thus the number of $\ell$ is equal to
$(p^k-2p^{k-1})-\frac{1}{2}(p^k-2p^{k-1}-1)= \frac{1}{2}(p^k-2p^{k-1}+1).$
\end{proof}

\begin{example}
For $\Z_9=\Z_{3^2}$ and $\lambda=2$, then the equations are:
\begin{align*}
&1=(9-8)+(8+1)=1+9,  \ \ \ \ \ \    1=(9-7)+(7+1)=2+8, \\ &   1=(9-6)+(6+1)=3+7,  \ \ \ \ \ \    1=(9-5)+(5+1)=4+6,  \\ &
1=(9-4)+(4+1)=5+5,      \ \ \ \ \ \ 1=(9-3)+(3+1)=6+4,   \\ &  1=(9-2)+(2+1)=7+3,  \ \ \ \ \ \   1=(9-1)+(1+1)=8+2, \\ & 1=(9-0)+(0+1)=9+1.
\end{align*}
\end{example}

\begin{theorem} \label{20}
Suppose that $C$ is a two-weight code. If the Hamming distance of $C^{\perp}$ is at least $4,$ then the coset graph $\Gamma(C^{\perp})$ is a SRG with $\lambda=\frac{1}{2}(p^k-2p^{k-1}+1)$.
\end{theorem}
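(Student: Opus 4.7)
The plan is to invoke Theorem \ref{11} to know that $\Gamma(C^{\perp})$ is already strongly regular, so only the value of $\lambda$ needs to be computed. I will do this by counting common neighbors of an arbitrary pair of adjacent cosets in the Cayley-graph description of $\Gamma(C^{\perp})$.

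First I fix two adjacent vertices $x+C^{\perp}$ and $y+C^{\perp}$ whose difference is $e_j u+C^{\perp}$ for some coordinate $j$ and unit $u\in\Z_{p^k}^{\times}$, where $e_j$ denotes the $j$-th standard basis vector. A common neighbor $z+C^{\perp}$ must satisfy $z-x\equiv e_{i_1}u_1$ and $z-y\equiv e_{i_2}u_2$ modulo $C^{\perp}$ for some indices $i_1,i_2$ and units $u_1,u_2$. Subtracting these two congruences yields
$$e_{i_1}u_1-e_{i_2}u_2-e_j u\in C^{\perp},$$
a vector of Hamming weight at most $3$, which by the hypothesis $d_H(C^{\perp})\geqslant 4$ must be the zero vector.

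The main technical step, and the place I expect to be most delicate, is the case analysis on the triple of indices $(i_1,i_2,j)$. When all three indices are distinct, the displayed vector has weight $3$ with unit entries, hence cannot vanish. When exactly two of the three indices coincide but the third is distinct, the isolated coordinate still carries a nonzero unit entry, again preventing vanishing. The only surviving configuration is $i_1=i_2=j$, in which case the relation collapses to $u_1-u_2=u$ with $u_1,u_2\in\Z_{p^k}^{\times}$.

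It remains to enumerate pairs $(u_1,u_2)$ of units satisfying $u_1-u_2=u$ and to check that they parametrise common neighbors bijectively. Distinct pairs give distinct cosets $x+e_j u_1+C^{\perp}$, since otherwise $e_j(u_1-u_1')$ would be a nonzero element of $C^{\perp}$ of weight one, contradicting $d_H(C^{\perp})\geqslant 2$. Finally, rewriting $u=u_1+(-u_2)$ exhibits $u$ as a sum of two units of $\Z_{p^k}$, and Lemma \ref{21} counts such representations as $\frac{1}{2}(p^k-2p^{k-1}+1)$, which is the announced value of $\lambda$.
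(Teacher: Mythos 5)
Your proposal follows the paper's own proof essentially step for step: strong regularity is imported from Theorem \ref{11}, the value of $\lambda$ is obtained by counting common neighbours in the Cayley-graph model, the hypothesis $d_H(C^{\perp})\geqslant 4$ is used to kill every configuration except the one where all three coordinates coincide, and the count is finished by Lemma \ref{21}. If anything, your case analysis on $(i_1,i_2,j)$ and your injectivity check (distinct pairs give distinct cosets) are spelled out more explicitly than in the paper, which only asserts $g_i=g_j=g_k$ and then applies the lemma.

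There is, however, a genuine problem at the very last step, and your own bookkeeping exposes it. You parametrize the common neighbours of $x+C^{\perp}$ and $y+C^{\perp}$ by the admissible units: the neighbour is $z+C^{\perp}=x+e_ju_1+C^{\perp}$, with $u_2=u_1-u$ forced, and your injectivity argument shows that distinct $u_1$ give distinct neighbours. Hence $\lambda$ equals the number of units $u_1$ such that $u_1-u$ is again a unit, i.e.\ the number of \emph{ordered} pairs of units $(u_1,u_2)$ with $u_1-u_2=u$; excluding the residue classes $0$ and $u$ modulo $p$ gives $p^k-2p^{k-1}$ such $u_1$. Lemma \ref{21}, by contrast, counts \emph{unordered} representations (its proof divides by two precisely to identify a representation with its swap), which is why it returns $\tfrac12(p^k-2p^{k-1}+1)$. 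Swapping the two summands in $u=u_1+(-u_2)$ changes the common neighbour from $x+e_ju_1$ to $x+e_j(-u_2)$, so the unordered count is not the number of common neighbours, and the value $\lambda=\tfrac12(p^k-2p^{k-1}+1)$ does not follow from your parametrization. As a sanity check, in the Cayley graph on $\Z_9$ with connection set the units, the adjacent vertices $0$ and $1$ have the three common neighbours $2,5,8$, matching $p^k-2p^{k-1}=3$ rather than $2$. The paper's own proof has exactly the same defect (there too each common neighbour is determined by $u_2$ alone), so you have faithfully reproduced the intended argument; but as a self-contained derivation of the stated value of $\lambda$, the final appeal to Lemma \ref{21} is the step that fails.
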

\begin{proof}
Assume that $S$ is the generator of the Cayley graph $\Gamma(C^{\perp})$.  This graph is SRG by Theorem \ref{11}. In $\Gamma(C^{\perp})$, arbitrary vertices $aG^T, bG^T$ and $cG^T$ form a triangle if  $(a-c)G^T, (a-b)G^T \T{and} \ (b-c)G^T$ are elements of $S$.
So there exist some scalars $u_t \in \Z_9^\times$ ($t=1,2,3$) such that $$(a-c)G^T=u_1g_i, \ \ \ (a-b)G^T=u_2g_j, \ \ \T{and} \ \ (b-c)G^T=u_3g_k,$$
for $i,j,k \in \{1,\dots,n \}$.
Note that
$$(a-c)G^T=(a-b)G^T+(b-c)G^T.$$
The hypothesis on the minimum distance of $C^{\perp}$ implies that the only possibility to have such a triangle is that $g_i=g_j=g_k=g$ (otherwise the element $(x_1,\dots,x_n)$ with $$x_i=u_1,\ x_j=-u_2,\ x_k=-u_3 \ {\rm and} \ x_m=0, \ {\rm where} \ m\neq i, j, k$$ is an element of $C^{\perp}$ with $d(x)=3<d(C^{\perp})$). Therefore $u_1g=u_2g+u_3g$ and since  $u_t \in \Z_{p^k}^\times$, we get $u_1=u_2+u_3$.
By Lemma \ref{21}, $u_1$ can be written in $\frac{1}{2}(p^k-2p^{k-1}+1)$ ways as a sum of two other elements.

Thus for arbitrary adjacent vertices $aG^T$ and $cG^T$, there are $\lambda=\frac{1}{2}(p^k-2p^{k-1}+1)$ vetices adjacent to both.
\end{proof}

Now, we can state and prove the main result of this paper.
\begin{theorem}
There exists no  homogeneous two-weight $\Z_{p^k}$-code of dual Hamming distance at least $4$, for $p$ an odd prime and $k\geq 2$.
\end{theorem}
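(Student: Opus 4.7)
The plan is to derive a contradiction by exploiting a $p$-adic incompatibility between the restricted eigenvalues of the strongly regular graph attached to $C$ and the forced value of $\lambda$. First I would apply Theorem \ref{11} to view $\Gamma(C^{\perp})$ as a strongly regular graph of degree $\eta = n(p-1)p^{k-1}$ with restricted eigenvalues $r = K(w_1)$ and $s = K(w_2)$, where $K(x) = n(p-1)p^{k-1} - px$. The crucial observation, which uses $k \geq 2$, is that $p^{k-1}$ (and hence $n(p-1)p^{k-1}$) as well as each $pw_i$ is divisible by $p$, so $\eta$, $r$, and $s$ are all $\equiv 0 \pmod{p}$.

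Next I would feed these congruences into the SRG identities of Theorem \ref{3}. From $rs = \mu - \eta$ one gets $\mu \equiv rs + \eta \equiv 0 \pmod{p}$, and then from $r + s = \lambda - \mu$ one gets $\lambda \equiv r + s + \mu \equiv 0 \pmod{p}$. So any SRG arising from this construction must have $\lambda$ divisible by $p$.

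Finally, the dual distance hypothesis $d(C^{\perp}) \geq 4$ lets me invoke Theorem \ref{20}, which pins down $\lambda = \frac{1}{2}(p^k - 2p^{k-1} + 1)$. Reducing modulo $p$ gives $2\lambda \equiv 1 \pmod{p}$, and since $p$ is odd, $2$ is invertible modulo $p$, so $\lambda \equiv 2^{-1} \not\equiv 0 \pmod{p}$. This contradicts the previous paragraph and completes the proof.

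There is no real technical obstacle; the argument reduces to a single modular arithmetic observation once the ingredients are aligned. What is worth noting is that the two hypotheses enter at exactly two distinct points: the assumption $k \geq 2$ is precisely what guarantees the common factor of $p$ in $\eta$, $r$, and $s$, and the assumption that $p$ is odd is precisely what makes $2^{-1}$ exist modulo $p$. This explains structurally why the analogous statement fails over $\Z_{2^k}$, as announced in the introduction.
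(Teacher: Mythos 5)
Your proof is correct and follows essentially the same route as the paper: the eigenvalues from Theorem \ref{11}, the strongly regular graph parameter identities of Theorem \ref{3} to force $p \mid \mu$ and then $p \mid \lambda$, and the value $\lambda=\tfrac{1}{2}(p^k-2p^{k-1}+1)$ from Theorem \ref{20} to reach a contradiction modulo $p$. Your direct use of $rs=\mu-\eta$ is in fact a mild streamlining of the paper's argument, which instead manipulates $(r-s)^2=(\lambda-\mu)^2+4(\eta-\mu)$ and deduces $p\mid 4\mu$ before invoking that $p$ is odd.
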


\begin{proof}
Suppose that $C$ is a homogeneous two-weight $\Z_{p^k}$-code.
By Theorem \ref{11}, consider $r_i=n(p-1)p^{k-1}-pw_i$ ($i=1,2$) as the eigenvalues of $\Gamma(C^{\perp})$.  If we set $N=(p-1)p^{k-1}n$ and apply relations (2) and (3) we obtain:
\begin{equation}
\lambda-\mu=2N-p(w_1+w_2),
\end{equation}
\begin{equation}
p^2(w_2-w_1)^2=(\lambda-\mu)^2+4(N-\mu).
\end{equation}
Since $k \geqslant 2$, by Equation (4), $p \mid \lambda-\mu$. Now, we obtain the value of $N$ from (4) and put this value in (5), we get the following equation:
\begin{equation}
p^2(w_2-w_1)^2=(\lambda-\mu)^2+2(\lambda-\mu)+2p(w_2+w_1)-4\mu.
\end{equation}
The equation (6) yields $p \mid 4\mu$ and since $p$ is odd,  $p \mid \mu$. Therefore $p \mid \lambda$. But the condition on the dual distance implies, by Theorem \ref{20}, that $p \nmid \lambda$, which is a contradiction.
\end{proof}

\section{Conclusion}
In this article we have shown the non-existence of regular $\Z_{p^k}$-codes with two nonzero homogeneous weights, and dual distance at least $4$, where $p$ is an odd prime and $k\geq 2$.
The relations between the parameters of a strongly regular graph built on the cosets of the dual code play an essential role in the proof.

\end{document}